\documentclass[aps,prd,twocolumn,amsmath,amssymb,longbibliography,nofootinbib]{revtex4-2}

\usepackage[T1]{fontenc}
\usepackage{lmodern}
\usepackage{microtype}
\usepackage{bm}
\usepackage{graphicx}
\usepackage{hyperref}
\usepackage{amsthm}
\usepackage{xcolor}

\definecolor{linkblack}{RGB}{20,20,20}
\definecolor{citeblue}{RGB}{0,70,140}
\definecolor{urlblue}{RGB}{0,90,120}

\hypersetup{
  colorlinks=true,
  linkcolor=linkblack,
  citecolor=citeblue,
  urlcolor=urlblue
}

\newtheorem{theorem}{Theorem}
\newtheorem{proposition}[theorem]{Proposition}
\newtheorem{corollary}[theorem]{Corollary}
\newtheorem{remark}[theorem]{Remark}
\newtheorem{lemma}[theorem]{Lemma}

\allowdisplaybreaks
\begin{document}

\title{Birkhoff rigidity from a covariant optical seed}

\author{D. A.~Easson}
\email{easson@asu.edu}
\affiliation{
Department of Physics \& Beyond Center for Fundamental Concepts in Science,
Arizona State University, Tempe, Arizona 85287-1504, USA}

\begin{abstract}
We present a local seed-to--Kerr--Schild route to Birkhoff rigidity in
four-dimensional spherical vacuum gravity. On the two-dimensional orbit space, the
areal radius $r$ determines a scalar $F:=-(\nabla r)^2$, and the reduced vacuum
equations imply $F(r)=1-2M/r$. We show that the normalized one-forms
$dr/F$ and $(*dr)/F$ are closed, so that the null combinations
$F^{-1}(dr\pm *dr)$ are exact null seed forms. Integrating these yields local
Eddington--Finkelstein coordinates in which the metric takes Kerr--Schild form
over a flat background. We then prove the corresponding uniqueness statement in
the stationary optical sector: spherical symmetry forces the inverse optical seed
$\mathcal R$ to equal $\pm r$, equivalently the optical seed
$\rho$ to equal $\mp 1/r$, and the resulting seed data reconstruct the
Schwarzschild family. Thus, Birkhoff rigidity is paired with a spherical
converse theorem in the stationary optical framework: Schwarzschild is the unique
spherically symmetric stationary vacuum Kerr--Schild geometry generated by a
nowhere-vanishing optical seed.
\end{abstract}

\maketitle

\section{Introduction}
\label{sec:intro}

Birkhoff's theorem is the local rigidity statement that every spherically symmetric
vacuum solution of Einstein's equations is locally Schwarzschild and therefore
admits an additional Killing symmetry~\cite{Birkhoff1923,Jebsen1921,Jebsen2005,Alexandrow1923,Eiesland1925,Deser:2004gi,Schleich:2009uj}. Many proofs and reformulations of this result
are known, including coordinate-free arguments, local proofs in adapted
coordinates, dual-null formulations built from radial null congruences, and
approaches organized around the Kodama/Misner--Sharp structure
\cite{Misner:1964je,Kodama:1979vn,Schmidt:1997mq,Schleich:2009uj,Maciel:2018tnc,Hayward:1994bu,Abreu:2010ru}. Related rigidity results appear in non-Lorentzian settings; for example, a Birkhoff-type theorem for Kleinian split-signature black holes was established in~\cite{Easson:2023ytf}. The novelty of the present paper lies in two linked statements: a local exact null seed-form route from spherical vacuum gravity to a stationary Kerr--Schild chart, and, using the stationary optical framework of~\cite{Easson:2026xod}, a converse statement showing that the real monopole seed uniquely generates the Schwarzschild geometry. We use the term ``seed'' for covariantly defined scalar or one-form data from which the null coordinates, congruence, and Kerr--Schild geometry are reconstructed. By Kerr--Schild form we mean a decomposition
\(g_{\mu\nu}=\eta_{\mu\nu}+2V n_\mu n_\nu\), where
\(\eta_{\mu\nu}\) is flat and \(n_\mu\) is null with respect to both
\(\eta_{\mu\nu}\) and \(g_{\mu\nu}\).

In the optical framework, the geometry is organized by a complex
optical seed \(\rho=-\theta-i\omega\), where \(\theta\) and \(\omega\) are the
expansion and signed twist of the stationary Kerr--Schild congruence, and whose
inverse \(\mathcal R=-1/\rho\) obeys an eikonal equation and reconstructs the
Kerr--Schild congruence. Schwarzschild is
generated by a real seed, while Kerr is generated by a genuinely complex one.

The route developed here begins on the two-dimensional orbit space
\(Q\). The reduced vacuum equations imply that the scalar \(F:=-(\nabla r)^2\)
depends only on the areal radius and satisfies the seed ODE derived in
Sec.~\ref{sec:orbit_seed}. On regions where \(dr\neq 0\) and \(F\neq 0\), the
normalized one-forms introduced in Sec.~\ref{sec:normalized_forms} are closed, and
their null combinations are exact. Integrating those exact null seed forms yields
local Eddington--Finkelstein coordinates and makes the Kerr--Schild structure
manifest. 

The present result is the spherical
rigidity statement internal to the stationary optical-seed framework: spherical symmetry collapses
the inverse optical seed to the real monopole branches \(\mathcal R=\pm r\),
equivalently \(\rho=\mp 1/r\), and the associated Kerr--Schild reconstruction
yields Schwarzschild.

This formulation touches several familiar viewpoints while remaining distinct from
each of them. It shares ingredients with coordinate-free arguments, dual-null
formulations, and Kodama-time constructions, but its organizing feature is the
exact null seed-form packaging of the reduced vacuum equations.

Our goal is not merely to recover Eddington--Finkelstein coordinates in a new
notation, but to show that, in spherical vacuum gravity they arise as integrals of
covariantly defined exact null seed forms extracted directly from the reduced
field equations on the orbit space. In this way, the Kerr--Schild structure
emerges as an intrinsic consequence of the orbit-space vacuum system itself,
rather than being imposed only after Schwarzschild has been recognized in a
preferred chart. This provides the connection to the stationary optical framework: the forward
orbit-space construction and the optical converse are linked by a common
seed-based organization of the geometry.

The paper is organized as follows. Sections~\ref{sec:orbit_seed}--\ref{sec:EF_KS}
derive the orbit-space seed equation, construct the exact null seed forms, and
integrate them to local Eddington--Finkelstein and Kerr--Schild form. In
Sec.~\ref{sec:collapse_seed} we prove the corresponding uniqueness statement in the
stationary optical sector and, using the local converse of Ref.~\cite{Easson:2026xod},
derive the associated spherical converse theorem. We conclude in
Sec.~\ref{sec:discussion} with a brief discussion.

\section{Orbit-space seed equations}
\label{sec:orbit_seed}

We work with mostly-minus signature and consider a spherically symmetric spacetime
with two-dimensional Lorentzian orbit space \(Q\), equipped with metric \(h_{ab}\),
so that
\begin{equation}
ds^2=h_{ab}(x)\,dx^a dx^b-r(x)^2\,d\Omega_2^2,
\label{eq:warped_metric}
\end{equation}
where \(r(x)\) is the areal radius and \(d\Omega_2^2\) is the unit round metric on
\(S^2\). All contractions, covariant derivatives, and Hodge duals below refer to
\(h_{ab}\).

Let \(\epsilon_{ab}\) denote the volume form on \(Q\), normalized by
\begin{equation}
\nabla_a\epsilon_{bc}=0,
\qquad
\epsilon_{ab}\epsilon^{ab}=-2.
\end{equation}
For any one-form \(\alpha_a\), define its Hodge dual by
\begin{equation}
(*\alpha)_a:=\epsilon_a{}^b\alpha_b.
\label{eq:hodge_def}
\end{equation}
On one-forms in two-dimensional Lorentzian signature one has
\begin{equation}
\alpha\cdot(*\alpha)=0,
\qquad
(*\alpha)^2=-\alpha^2,
\qquad
*(*\alpha)=\alpha.
\label{eq:hodge_identities}
\end{equation}
We will also use
\begin{equation}
d(*df)=-(\Box f)\,\epsilon
\label{eq:dstar_identity}
\end{equation}
for any scalar \(f\), and
\begin{equation}
\alpha\wedge *\beta = -(\alpha\cdot\beta)\,\epsilon
\label{eq:wedge_identity}
\end{equation}
for any one-forms \(\alpha,\beta\).

\begin{proposition}[Seed ODE]
\label{prop:seed_ode}
Let \eqref{eq:warped_metric} be vacuum,
\begin{equation}
R_{\mu\nu}=0,
\end{equation}
with nonconstant areal radius \(r\). Then on \(Q\),
\begin{align}
R_{ab}[h]-2r^{-1}\nabla_a\nabla_b r &= 0,
\label{eq:red_tensor}\\
1+(\nabla r)^2+r\Box r &= 0.
\label{eq:red_scalar}
\end{align}
Since \(Q\) is two-dimensional,
\begin{equation}
\nabla_a\nabla_b r=\lambda\,h_{ab},
\qquad
\lambda:=\frac12\Box r.
\label{eq:hess_trace}
\end{equation}
Defining
\begin{equation}
F:=-\,(\nabla r)^2,
\label{eq:F_def}
\end{equation}
one finds, locally wherever \(dr\neq 0\),
\begin{equation}
1-F-rF'(r)=0,
\label{eq:F_ode}
\end{equation}
and hence
\begin{equation}
F(r)=1-\frac{2M}{r}
\label{eq:F_solution}
\end{equation}
for some constant \(M\).
\end{proposition}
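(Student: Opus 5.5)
The plan has four steps: compute the Ricci tensor of the warped product \eqref{eq:warped_metric}, use two-dimensionality to force the Hessian of $r$ to be pure trace, derive \eqref{eq:F_ode} from the scalar equation, and integrate.

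\emph{Step 1: the reduced equations.} I would first compute the Ricci tensor of \eqref{eq:warped_metric} with the standard warped-product formulas for a two-dimensional base and a round $S^2$ fibre of radius $r$. In the mostly-minus convention the orbit-space block is
$$R_{ab}=R_{ab}[h]-\frac{2}{r}\nabla_a\nabla_b r.$$
The mixed components $R_{aA}$ vanish identically. The sphere block is proportional to the round metric, $R_{AB}\propto\bigl(1+(\nabla r)^2+r\Box r\bigr)\hat g_{AB}$. Imposing $R_{\mu\nu}=0$ then gives \eqref{eq:red_tensor} and \eqref{eq:red_scalar}. The only delicate part of this step is keeping the signs consistent with the signature.

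\emph{Step 2: the Hessian is pure trace.} In two dimensions every metric is Einstein, so $R_{ab}[h]=\tfrac12 R[h]\,h_{ab}$. Hence \eqref{eq:red_tensor} says $\nabla_a\nabla_b r$ is proportional to $h_{ab}$. Taking the trace fixes the factor as $\lambda=\tfrac12\Box r$, which is \eqref{eq:hess_trace}.

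\emph{Step 3: $F$ is a function of $r$.} Differentiate the definition \eqref{eq:F_def}:
$$\nabla_a F=-2\nabla^b r\,\nabla_a\nabla_b r=-2\lambda\,\nabla_a r=-(\Box r)\,\nabla_a r.$$
Thus $dF\wedge dr=0$. Wherever $dr\neq0$, $r$ can serve as a local coordinate along the gradient direction, so $F$ is locally a function $F(r)$ and $dF=F'(r)\,dr$.

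This is the step I expect to be the main obstacle. The difficulty is justifying the passage to a function of $r$ when $dr$ is null, where $F=0$. There one uses only $dF\wedge dr=0$ together with $dr\neq0$: $r$ extends to a local coordinate system, and then $\partial F$ in the transverse coordinate vanishes. So it suffices to work on connected neighbourhoods where $dr\neq0$.

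Comparing the two expressions for $dF$ gives $F'(r)=-\Box r$. Substituting this into \eqref{eq:red_scalar} gives $1-F-rF'=0$, which is \eqref{eq:F_ode}.

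\emph{Step 4: integrate.} Equation \eqref{eq:F_ode} can be written as $(rF)'=1$. Integrating gives $rF=r-2M$ for a constant of integration, which I label $2M$, so $F=1-2M/r$. This is \eqref{eq:F_solution}. The constant is fixed on each connected component of the region where $dr\neq0$.
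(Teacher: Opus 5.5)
Your proposal is correct and follows the paper's proof essentially step for step. You obtain the reduced equations, use the two-dimensional identity $R_{ab}[h]=\frac12 R[h]\,h_{ab}$ to make the Hessian pure trace, differentiate $F$ to get $\nabla_a F=-2\lambda\nabla_a r$ and hence $F'=-\Box r$, and then substitute and integrate; the only extras are that you sketch where the warped-product Ricci components come from (the paper simply asserts them) and spell out that $dF\wedge dr=0$ with $dr\neq 0$ suffices even where $dr$ is null, which is what the paper's phrase ``locally wherever $dr\neq 0$'' already relies on.
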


\begin{proof}
For the warped product \eqref{eq:warped_metric}, the vacuum Einstein equations
reduce to \eqref{eq:red_tensor} and \eqref{eq:red_scalar}. Since \(Q\) is
two-dimensional,
\begin{equation}
R_{ab}[h]=\frac12 R[h]\,h_{ab},
\end{equation}
so \eqref{eq:red_tensor} implies \eqref{eq:hess_trace}.

Now differentiate \eqref{eq:F_def}:
\begin{equation}
\nabla_a F
=
-2\nabla^b r\,\nabla_a\nabla_b r
=
-2\lambda\,\nabla_a r.
\end{equation}
Thus \(F=F(r)\) locally wherever \(dr\neq 0\), and
\begin{equation}
F'(r)=-2\lambda=-\Box r.
\label{eq:Fprime_minus_box}
\end{equation}
Substituting \eqref{eq:Fprime_minus_box} into \eqref{eq:red_scalar} gives
\eqref{eq:F_ode}, which integrates to \eqref{eq:F_solution}.
\end{proof}

Proposition~\ref{prop:seed_ode} reduces the local vacuum content of spherical
symmetry to the single scalar function \(F(r)\), which will organize the seed-form
construction in the following sections.

\section{Closed normalized seed forms}
\label{sec:normalized_forms}

The seed equation for \(F(r)\) has an immediate differential consequence.

\begin{lemma}[Closed normalized seed forms]
\label{lem:normalized_forms}
On any open set where \(dr\neq 0\) and \(F\neq 0\), define
\begin{equation}
dR_*:=\frac{dr}{F},
\qquad
dT:=\frac{*dr}{F}.
\label{eq:dRstar_dT}
\end{equation}
Then both one-forms are closed:
\begin{equation}
d(dR_*)=0,
\qquad
d(dT)=0.
\label{eq:dRstar_dT_closed}
\end{equation}
Hence, on a simply connected local region, there exist functions \(R_*\) and \(T\)
such that \eqref{eq:dRstar_dT} holds.
\end{lemma}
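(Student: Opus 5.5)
The plan is to treat the two one-forms separately, using only Proposition~\ref{prop:seed_ode}. The key input from that proposition is that $F$ is locally a function of $r$ alone wherever $dr\neq 0$, with $F'(r)=-\Box r$ by \eqref{eq:Fprime_minus_box}.

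For $dR_*=dr/F$, closedness is immediate. Since $F=F(r)$ on the region, we have $dF=F'(r)\,dr$. Then
\[
d\!\left(\frac{dr}{F}\right)=-\frac{1}{F^2}\,dF\wedge dr=-\frac{F'}{F^2}\,dr\wedge dr=0 .
\]
Equivalently, $dr/F=d\Phi(r)$ for any local antiderivative $\Phi$ of $1/F$. This already supplies a primitive $R_*$, namely the tortoise coordinate.

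For $dT=(*dr)/F$, I would compute
\[
d\!\left(\frac{*dr}{F}\right)=-\frac{1}{F^2}\,dF\wedge *dr+\frac{1}{F}\,d(*dr).
\]
In the first term, $dF=F'\,dr$, and the identity \eqref{eq:wedge_identity} gives $dr\wedge *dr=-(\nabla r)^2\,\epsilon=F\,\epsilon$. The first term is therefore $-(F'/F)\,\epsilon$. For the second term, \eqref{eq:dstar_identity} gives $d(*dr)=-(\Box r)\,\epsilon$. Substituting \eqref{eq:Fprime_minus_box}, $\Box r=-F'$, this becomes $(F'/F)\,\epsilon$. The two contributions cancel, so $d(dT)=0$.

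The only point that needs care is the bookkeeping of signs and conventions. Specifically, I need to check that the convention $\epsilon_{ab}\epsilon^{ab}=-2$ and the definition \eqref{eq:hodge_def} make \eqref{eq:wedge_identity} and \eqref{eq:dstar_identity} mutually consistent. The cancellation depends on their relative sign. I would verify this in a null or orthonormal frame before relying on it. If a sign discrepancy appeared, it would have to be absorbed into the identities, because the vacuum relation $F'=-\Box r$ is fixed. Note that the scalar equation \eqref{eq:red_scalar} is not even needed here; only the Hessian relation \eqref{eq:hess_trace} enters, through $F=F(r)$ and $F'=-\Box r$.

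Finally, the Poincaré lemma on a simply connected (for instance, contractible) open subset of $Q$ yields functions $R_*$ and $T$ with $dR_*=dr/F$ and $dT=(*dr)/F$. These primitives are unique up to additive constants.
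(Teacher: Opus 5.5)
Your proposal is correct and follows the paper's proof essentially step for step. The paper also uses $dF=F'\,dr$ to kill $d(dr/F)$, cancels the term $-(F'/F)\,\epsilon$ coming from $dr\wedge *dr=F\,\epsilon$ against $-(\Box r/F)\,\epsilon=(F'/F)\,\epsilon$ via $F'=-\Box r$, and finishes with the Poincaré lemma. The sign check you flag goes through: with $h=dt^2-dx^2$ and $\epsilon=dt\wedge dx$ one gets $*\alpha=-\alpha_1\,dt-\alpha_0\,dx$, and both $\alpha\wedge *\beta=-(\alpha\cdot\beta)\,\epsilon$ and $d(*df)=-(\Box f)\,\epsilon$ hold as stated. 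You are also right that only the Hessian relation enters, not the scalar equation.
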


\begin{proof}
Since \(F=F(r)\),
\begin{equation}
dF=F'(r)\,dr.
\end{equation}
Therefore
\begin{equation}
d(dR_*)
=
d(F^{-1}dr)
=
-\,F^{-2}dF\wedge dr
=0.
\end{equation}
For \(dT\),
\begin{align}
d(dT)
&=
d(F^{-1}*dr)
\nonumber\\
&=
-\,F^{-2}dF\wedge *dr+F^{-1}d(*dr)
\nonumber\\
&=
-\,\frac{F'}{F^2}\,dr\wedge *dr-\frac{\Box r}{F}\,\epsilon.
\end{align}
Using \eqref{eq:wedge_identity},
\begin{equation}
dr\wedge *dr = -(dr)^2\,\epsilon = F\,\epsilon,
\end{equation}
and then \eqref{eq:Fprime_minus_box}, we obtain
\begin{equation}
d(dT)
=
-\,\frac{F'}{F}\,\epsilon-\frac{\Box r}{F}\,\epsilon
=0.
\end{equation}
Thus both forms are closed.
\end{proof}

\begin{corollary}[Diagonal orbit-space form]
\label{cor:static_orbit}
On the same local region, the orbit-space metric may be written as
\begin{equation}
h
=
F(r)\,\bigl(dT^2-dR_*^2\bigr)
=
F(r)\,dT^2-\frac{dr^2}{F(r)}.
\label{eq:static_orbit}
\end{equation}
\end{corollary}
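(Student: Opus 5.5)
The plan is to show that the pair $\{dT, dR_*\}$ is an orthogonal frame for $h$ with norms $F$ and $-F$. Any symmetric 2-tensor on a two-dimensional space is fixed once its components on a basis are known, so those norms will determine $h$.

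\textbf{Step 1: compute the Gram matrix.} From the definitions in Lemma~\ref{lem:normalized_forms}, $dR_* = dr/F$ and $dT = (*dr)/F$. By \eqref{eq:F_def}, $(dr)^2=-F$, so
\[
(dR_*)^2=-1/F .
\]
The identity $(*\alpha)^2=-\alpha^2$ from \eqref{eq:hodge_identities} then gives
\[
(dT)^2=1/F ,
\]
and $\alpha\cdot(*\alpha)=0$ gives $dR_*\cdot dT=0$.

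\textbf{Step 2: check that the pair is a basis.} Where $F\neq 0$, $dr$ is non-null. A non-null one-form and its Hodge dual are linearly independent: if $*dr=c\,dr$, then $0=dr\cdot *dr = c\,(dr)^2$ forces $c=0$, and hence $*dr=0$, which is impossible since $*$ is invertible. So $\{dT,dR_*\}$ is a coframe on the region, and by Lemma~\ref{lem:normalized_forms} it is locally given by the differentials of the functions $T$ and $R_*$, so $(T,R_*)$ serve as local coordinates.

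\textbf{Step 3: invert the Gram matrix.} The inverse metric in this coframe reads
\[
h^{-1}=\tfrac1F\,\partial_T\otimes\partial_T-\tfrac1F\,\partial_{R_*}\otimes\partial_{R_*},
\]
where $\partial_T,\partial_{R_*}$ is the dual frame. Inverting the diagonal Gram matrix yields
\[
h=F\,dT^2-F\,dR_*^2 .
\]
Substituting $dR_*=dr/F$ gives the second form in \eqref{eq:static_orbit}.

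The only point needing care is the sign bookkeeping in mostly-minus signature. I expect the hardest step to be confirming that $(*dr)^2=+F$, i.e.\ that the convention $\epsilon_{ab}\epsilon^{ab}=-2$ really yields $(*\alpha)^2=-\alpha^2$. This is already asserted in \eqref{eq:hodge_identities}, so I will simply invoke it.

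With that identity in hand, the timelike or spacelike character of $dT$ and $dR_*$ flips with the sign of $F$. The formula holds uniformly in both regions $F>0$ and $F<0$, and no case split is needed.
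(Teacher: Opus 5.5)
Your proposal is correct and follows essentially the same route as the paper: you use the Hodge identities to get $(dT)^2=1/F$, $(dR_*)^2=-1/F$ and $dT\cdot dR_*=0$, read off $h^{-1}=F^{-1}(\partial_T^2-\partial_{R_*}^2)$, invert it, and then substitute $dr=F\,dR_*$. Your Step 2 checks that $\{dT,dR_*\}$ is a genuine coframe, since a non-null one-form cannot be proportional to its Hodge dual; this makes explicit a point the paper's proof uses tacitly.
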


\begin{proof}
From \eqref{eq:dRstar_dT} and \eqref{eq:hodge_identities},
\begin{align}
(dT)^2 &= \frac{(*dr)^2}{F^2} = \frac{1}{F}, \\
(dR_*)^2 &= \frac{(dr)^2}{F^2} = -\frac{1}{F}, \\
dT \cdot dR_* &= 0.
\end{align}
Hence
\begin{equation}
h^{-1}=F^{-1}\bigl(\partial_T^2-\partial_{R_*}^2\bigr),
\end{equation}
or equivalently,
\begin{equation}
h=F(r)\,\bigl(dT^2-dR_*^2\bigr).
\end{equation}
Since \(dr=F\,dR_*\), the second equality in \eqref{eq:static_orbit} follows immediately.
\end{proof}

\begin{corollary}[Kodama one-form and Killing symmetry]
\label{cor:Kodama}
Define the Kodama one-form on \(Q\) by
\begin{equation}
K^\flat:=*dr.
\label{eq:Kodama_oneform}
\end{equation}
Then
\begin{equation}
K^\flat = F\,dT,
\qquad
K^2=F,
\label{eq:Kodama_FdT}
\end{equation}
and the corresponding vector field \(K^a:=h^{ab}K_b\) is Killing:
\begin{equation}
\nabla_{(a}K_{b)}=0.
\label{eq:Kodama_Killing}
\end{equation}
\end{corollary}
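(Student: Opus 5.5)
The identities $K^\flat=F\,dT$ and $K^2=F$ come straight from the definitions; the Killing property follows from the Hessian identity \eqref{eq:hess_trace} of Proposition~\ref{prop:seed_ode}.

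\textbf{The identities \eqref{eq:Kodama_FdT}.} By \eqref{eq:dRstar_dT}, $dT=F^{-1}*dr$, so $K^\flat=*dr=F\,dT$ on the region where $F\neq0$. For the norm, the second relation in \eqref{eq:hodge_identities} gives $K^2=(*dr)^2=-(dr)^2=F$ by \eqref{eq:F_def}. This holds wherever $dr\neq0$, even if $F$ vanishes.

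\textbf{The Killing property.} I would compute $\nabla_a K_b$ directly from \eqref{eq:hodge_def}. Since $\nabla\epsilon=0$,
\begin{equation}
\nabla_a K_b=\epsilon_b{}^c\,\nabla_a\nabla_c r .
\end{equation}
Inserting $\nabla_a\nabla_c r=\lambda h_{ac}$ from \eqref{eq:hess_trace} gives
\begin{equation}
\nabla_a K_b=\lambda\,\epsilon_{ba}.
\end{equation}
This is antisymmetric in $(a,b)$, so $\nabla_{(a}K_{b)}=0$. The argument holds on all of $Q$, with no need for $F\neq0$ or the local coordinates. As a cross-check one can work in the chart of Corollary~\ref{cor:static_orbit}. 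There $dT$ is a unit-normalized exact form with $K^\flat=F\,dT$, and $K^a=\partial_T$ because $h^{TT}=F^{-1}$. Since the metric coefficients in \eqref{eq:static_orbit} depend only on $r$, which is $T$-independent, $\partial_T$ is manifestly Killing.

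\textbf{Main obstacle.} The only delicate point is index and sign conventions: which slot of $\epsilon$ is contracted in \eqref{eq:hodge_def}, and whether $K^a$ comes out as $+\partial_T$ or $-\partial_T$. Neither affects the conclusion, because antisymmetry of $\epsilon_{ab}$ is all that is needed. I would also note that the Killing property lifts to the full spacetime \eqref{eq:warped_metric}, since $K$ is tangent to $Q$ and $\mathcal L_K r=K^a\nabla_a r=\epsilon^{ab}\nabla_a r\nabla_b r=0$. So $K$ preserves both $h$ and the warping function $r^2$. This extra symmetry is the content of Birkhoff's theorem.
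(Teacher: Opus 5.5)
Your proposal is correct and follows the paper's proof: $K^\flat=F\,dT$ and $K^2=(*dr)^2=-(dr)^2=F$ come directly from the definitions and Hodge identities, and the Killing property follows from $\nabla_aK_b=\epsilon_b{}^c\nabla_a\nabla_c r=\lambda\,\epsilon_{ba}$ via the Hessian identity \eqref{eq:hess_trace}. One small slip, confined to your cross-check: $dT$ is not unit-normalized, since $(dT)^2=1/F$, but this does not affect the conclusion $K^a=\partial_T$.
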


\begin{proof}
The relation \(K^\flat=F\,dT\) follows directly from \eqref{eq:dRstar_dT}. For the
Killing property,
\begin{align}
\nabla_aK_b
&=
\nabla_a(\epsilon_b{}^c\nabla_c r)
=
\epsilon_b{}^c\nabla_a\nabla_c r
\nonumber\\
&=
\lambda\,\epsilon_b{}^c h_{ac}
=
\lambda\,\epsilon_{ba},
\end{align}
using \eqref{eq:hess_trace}. Hence
\begin{equation}
\nabla_{(a}K_{b)}=0.
\end{equation}
Finally,
\begin{equation}
K^2=(*dr)^2=-\,(dr)^2=F.
\end{equation}
\end{proof}

\section{Exact null seed forms}
\label{sec:null_seed_forms}

The null combinations of the normalized seed forms are the primary objects in our proof. They are the point at which the present formulation most clearly differs from
more standard static-coordinate, Kodama, and dual-null treatments of local
Birkhoff rigidity.

\begin{theorem}[Exact null seed forms in spherical vacuum]
\label{thm:exact_null_seed_forms}
On any simply connected local region of a nontrivial spherically symmetric vacuum
spacetime for which \(dr\neq 0\) and \(F\neq 0\), define
\begin{equation}
k_\pm := \frac{1}{F}\bigl(dr\pm *dr\bigr).
\label{eq:kpm_def}
\end{equation}
Then:
\begin{align}
k_\pm &= dR_* \pm dT,
\label{eq:kpm_TR}\\
k_\pm^2 &= 0,
\label{eq:kpm_null}\\
dk_\pm &= 0,
\label{eq:kpm_closed}\\
k_\pm\cdot dr &= -1.
\label{eq:kpm_dot_dr}
\end{align}
The associated null vectors
\begin{equation}
\ell_\pm^a:=h^{ab}(k_\pm)_b
\end{equation}
are geodesic and affinely parametrized,
\begin{equation}
\ell_\pm^b\nabla_b(\ell_\pm)_a=0,
\label{eq:geodesic}
\end{equation}
with
\begin{equation}
\ell_\pm^a\nabla_a r=-1.
\label{eq:r_affine}
\end{equation}
Thus \(r\) serves as an affine parameter along the radial null geodesics, up to
sign.
\end{theorem}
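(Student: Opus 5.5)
The plan is to read off almost everything from Lemma~\ref{lem:normalized_forms} and the Hodge identities \eqref{eq:hodge_identities}, and then derive the geodesic property from exactness plus nullity. First, \eqref{eq:kpm_TR} is immediate from the definitions \eqref{eq:dRstar_dT}:
\begin{equation}
F^{-1}dr\pm F^{-1}*dr=dR_*\pm dT.
\end{equation}
Closedness \eqref{eq:kpm_closed} then follows by linearity from \eqref{eq:dRstar_dT_closed}. On the simply connected region, each $k_\pm$ is in fact exact, $k_\pm=d(R_*\pm T)$.

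For nullity \eqref{eq:kpm_null}, expand the square and use $dr\cdot *dr=0$ and $(*dr)^2=-(dr)^2$ from \eqref{eq:hodge_identities}. This gives
\begin{equation}
F^2 k_\pm^2=(dr)^2\pm 2\,dr\cdot *dr+(*dr)^2=0.
\end{equation}
Equivalently, one can use Corollary~\ref{cor:static_orbit}, where $(dR_*)^2=-1/F$, $(dT)^2=1/F$ and $dT\cdot dR_*=0$. Similarly,
\begin{equation}
k_\pm\cdot dr=F^{-1}\bigl[(dr)^2\pm (*dr)\cdot dr\bigr]=F^{-1}(-F)=-1,
\end{equation}
using the definition \eqref{eq:F_def}. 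This is \eqref{eq:kpm_dot_dr}, and \eqref{eq:r_affine} is the same statement written with the raised index.

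The only step with real content is \eqref{eq:geodesic}, and it is also short. Since $dk_\pm=0$, we have $\nabla_a(k_\pm)_b=\nabla_b(k_\pm)_a$ (the connection is torsion-free). Hence
\begin{equation}
\ell_\pm^b\nabla_b(\ell_\pm)_a=\ell_\pm^b\nabla_a(k_\pm)_b=\tfrac12\nabla_a\bigl(k_\pm^2\bigr)=0,
\end{equation}
because $k_\pm^2$ vanishes identically on the open region. The standard fact that the gradient of a null function generates affinely parametrized null geodesics is thus applied to the exact forms $k_\pm$.

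Finally, along an integral curve of $\ell_\pm$ with parameter $s$ we have $dr/ds=\ell_\pm^a\nabla_a r=-1$. So $r=-s+\text{const}$, and $r$ is an affine parameter up to sign and shift.

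I expect the main obstacle to be bookkeeping rather than substance. One point is keeping the sign conventions of the Hodge dual consistent in mostly-minus signature, in particular $(*dr)^2=-(dr)^2=F$. The other is making sure that closedness holds on the whole open set, which is where Lemma~\ref{lem:normalized_forms} (and through it the vacuum relation \eqref{eq:Fprime_minus_box}) is genuinely used.
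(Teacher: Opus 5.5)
Your proposal is correct and follows the paper's proof essentially step for step. You get $k_\pm=dR_*\pm dT$ and closure from Lemma~\ref{lem:normalized_forms}, nullity and $k_\pm\cdot dr=-1$ from the Hodge identities with $(dr)^2=-F$, and the geodesic property from $\nabla_a(k_\pm)_b=\nabla_b(k_\pm)_a$, which gives $\ell_\pm^b\nabla_b(k_\pm)_a=\tfrac12\nabla_a(k_\pm^2)=0$; your remark that $r=-s+\text{const}$ along integral curves just makes the paper's closing sentence explicit.
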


\begin{proof}
Equation \eqref{eq:kpm_TR} is immediate from \eqref{eq:dRstar_dT}. Closure
\eqref{eq:kpm_closed} then follows from Lemma~\ref{lem:normalized_forms}.

To prove nullness, use \eqref{eq:hodge_identities} and \((dr)^2=-F\):
\begin{equation}
(dr\pm *dr)^2
=
(dr)^2+(*dr)^2\pm 2\,dr\cdot *dr
=
-F+F=0,
\end{equation}
hence \(k_\pm^2=0\). Likewise,
\begin{equation}
k_\pm\cdot dr
=
\frac{1}{F}\bigl((dr)^2\pm *dr\cdot dr\bigr)
=
-\frac{F}{F}
=-1.
\end{equation}

Since \(dk_\pm=0\), \(\nabla_a(k_\pm)_b=\nabla_b(k_\pm)_a\). Therefore
\begin{align}
\ell_\pm^b\nabla_b (k_\pm)_a
&=
\ell_\pm^b\nabla_a (k_\pm)_b
=
\frac12\nabla_a(k_\pm^2)
=
0.
\end{align}
Since \((\ell_\pm)_a=(k_\pm)_a\), this is \eqref{eq:geodesic}. Finally,
\begin{equation}
\ell_\pm^a\nabla_a r=k_\pm\cdot dr=-1,
\end{equation}
which is \eqref{eq:r_affine}.
\end{proof}

\begin{corollary}[Double-null orbit-space form]
\label{cor:double_null_orbit}
Let
\begin{equation}
v:=T+R_*,
\qquad
u:=T-R_*,
\label{eq:uv_defs}
\end{equation}
so that
\begin{equation}
dv=k_+,
\qquad
du=-\,k_-.
\end{equation}
With this convention, \(v\) is the advanced null coordinate and \(u\) is the retarded null coordinate.

Then the orbit-space metric may be written equivalently as
\begin{equation}
h=F(r)\,\bigl(dT^2-dR_*^2\bigr)=F(r)\,du\,dv.
\label{eq:double_null_orbit}
\end{equation}
\end{corollary}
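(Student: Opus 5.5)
The plan is to read off both claims directly from Theorem~\ref{thm:exact_null_seed_forms} and Corollary~\ref{cor:static_orbit}, with no further use of the field equations. By the definitions \eqref{eq:uv_defs} and linearity of $d$, we have $dv=dT+dR_*$ and $du=dT-dR_*$. By \eqref{eq:kpm_TR}, $k_+=dR_*+dT$, so $dv=k_+$. Also $k_-=dR_*-dT=-(dT-dR_*)$, so $du=-k_-$.

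These are exact differentials of genuine functions. Lemma~\ref{lem:normalized_forms} guarantees that $T$ and $R_*$ exist on the simply connected region, so $u$ and $v$ are well defined there.

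For the metric, I would start from \eqref{eq:static_orbit} and factor the quadratic form. Using symmetrized products,
\begin{equation}
dT^2-dR_*^2=(dT-dR_*)(dT+dR_*)=du\,dv.
\end{equation}
Substituting this into $h=F(r)(dT^2-dR_*^2)$ gives \eqref{eq:double_null_orbit}. As a consistency check, I would verify the result through inner products. Theorem~\ref{thm:exact_null_seed_forms} gives $(du)^2=k_-^2=0$ and $(dv)^2=k_+^2=0$. From the inner products in the proof of Corollary~\ref{cor:static_orbit}, $du\cdot dv=(dT)^2-(dR_*)^2=2/F$. Inverting then gives $h=F\,du\,dv$, with $du\,dv$ understood as the symmetrized product.

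The advanced and retarded labels follow from the time orientation in which $T$ increases. Along a curve with $dv=0$ one has $dR_*=-dT$, so it is ingoing, and $v$ labels advanced time. Along a curve with $du=0$, $R_*$ increases with $T$, so it is outgoing and $u$ is retarded. In the region $F>0$ this is unambiguous, since $T$ is a time function there: $(dT)^2=1/F>0$.

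The only subtle point is sign and time-orientation bookkeeping when $F<0$. There $T$ is spacelike and the advanced/retarded terminology is conventional. The metric identity itself holds regardless of the sign of $F$, so I would present the labels as naming conventions that are literally correct only in the $F>0$ region.
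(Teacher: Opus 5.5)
Your proposal is correct and follows essentially the same route as the paper: the paper inverts the definitions to $dT=\tfrac12(dv+du)$, $dR_*=\tfrac12(dv-du)$ and substitutes into Eq.~\eqref{eq:static_orbit}, which is your factorization $dT^2-dR_*^2=(dT-dR_*)(dT+dR_*)=du\,dv$ read in the other direction. Your inner-product check ($(du)^2=(dv)^2=0$, $du\cdot dv=2/F$) and your remarks on the advanced/retarded labels (literally valid for $F>0$, conventional for $F<0$) go beyond the paper's proof but are sound.
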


\begin{proof}
From \eqref{eq:uv_defs},
\begin{equation}
dT=\frac12(dv+du),
\qquad
dR_*=\frac12(dv-du).
\end{equation}
Then
\begin{equation}
dT^2-dR_*^2=du\,dv,
\end{equation}
which gives \eqref{eq:double_null_orbit}.
\end{proof}

\section{Emergent Eddington--Finkelstein and Kerr--Schild structure}
\label{sec:EF_KS}

Integrating the exact null seed forms yields local Eddington--Finkelstein
coordinates and a local Kerr--Schild decomposition over a flat background.

\begin{theorem}[From exact null seed forms to local Eddington--Finkelstein and Kerr--Schild form]
\label{thm:EF_KS}
On any simply connected local region where \(dr\neq 0\) and \(F\neq 0\), the exact
null seed forms of Theorem~\ref{thm:exact_null_seed_forms} integrate to local null
coordinates for which the metric takes the ingoing Eddington--Finkelstein form
\begin{equation}
ds^2
=
F(r)\,dv^2-2\,dv\,dr-r^2d\Omega_2^2,
\qquad
F(r)=1-\frac{2M}{r},
\label{eq:ingoing_EF}
\end{equation}
or the outgoing form
\begin{equation}
ds^2
=
F(r)\,du^2+2\,du\,dr-r^2d\Omega_2^2.
\label{eq:outgoing_EF}
\end{equation}
Equivalently, in the ingoing chart,
\begin{equation}
g_{\mu\nu}
=
\eta_{\mu\nu}+2V\,n_\mu n_\nu,
\qquad
V=-\frac{M}{r},
\qquad
n_\mu dx^\mu=dv,
\label{eq:KS_main}
\end{equation}
with our mostly-minus signature convention, the Schwarzschild Kerr--Schild profile
appears as \(V=-M/r\).
Here, the flat background is
\begin{equation}
\eta
=
dv^2-2\,dv\,dr-r^2d\Omega_2^2.
\label{eq:flat_bg}
\end{equation}
Hence the spacetime is locally Schwarzschild.
\end{theorem}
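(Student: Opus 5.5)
We start from the double-null form of Corollary~\ref{cor:double_null_orbit}. Since $k_+=dv$ is exact by Theorem~\ref{thm:exact_null_seed_forms}, the function $v=T+R_*$ exists on the simply connected region. We use $(v,r)$ as local coordinates on $Q$; they are independent because $dv\wedge dr=k_+\wedge dr$. Writing $k_+=F^{-1}(dr+*dr)$, this equals $F^{-1}\,*dr\wedge dr$. By \eqref{eq:wedge_identity}, $dr\wedge *dr=F\,\epsilon$, so $dv\wedge dr=-\epsilon\neq0$. To express $h$ in these coordinates we use $dR_*=dr/F$ and $dT=dv-dR_*=dv-dr/F$. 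Substituting into Corollary~\ref{cor:static_orbit} gives
\begin{equation}
h=F\bigl(dv-F^{-1}dr\bigr)^2-F^{-1}dr^2=F\,dv^2-2\,dv\,dr .
\end{equation}
The cross term arises as $-2F\cdot F^{-1}$ and the $dr^2$ terms cancel. Adding the sphere part of \eqref{eq:warped_metric}, with $F=1-2M/r$ from Proposition~\ref{prop:seed_ode}, yields \eqref{eq:ingoing_EF}. The outgoing case runs the same way with $u=T-R_*$, so $dT=du+dr/F$, which gives $F\,du^2+2\,du\,dr$ and hence \eqref{eq:outgoing_EF}.

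Nothing in this first step depends on the sign of $F$. The normalization $F\,dT^2-dr^2/F$ from Corollary~\ref{cor:static_orbit} holds for either sign, and the final expression is manifestly regular across $F=0$. That observation supports interpreting the chart as the analytic extension, although the theorem only claims the result locally.

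For the Kerr--Schild statement we split $F\,dv^2=dv^2-(2M/r)\,dv^2$, which gives $g=\eta+2V\,n\otimes n$ with $V=-M/r$ and $n=dv$, where $\eta$ is \eqref{eq:flat_bg}. Two checks remain. First, $n$ is null for both metrics. The inverse of the $(v,r)$ block of $g$ has $g^{vv}=0$, so $n^2=0$ for $g$; this also follows directly from $k_+^2=0$. The same holds for $\eta$, whose block is the $F\to1$ case.

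Second, $\eta$ must be flat. We substitute $t=v-r$ into $\eta$. Then $dv=dt+dr$, and
\begin{equation}
dv^2-2\,dv\,dr=dt^2-dr^2 ,
\end{equation}
so $\eta=dt^2-dr^2-r^2d\Omega_2^2$, which is Minkowski space in spherical coordinates. This final comparison with the standard Schwarzschild line element also establishes "locally Schwarzschild".

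The main obstacle is not computational but lies in the coordinate legitimacy step. One must ensure that $(v,r)$ is a valid chart covering the region, which is the non-vanishing of $dv\wedge dr$ shown above. One must also fix the orientation and sign conventions of $*$, so that $dv=k_+$ really produces the $-2\,dv\,dr$ sign of the ingoing form. I would fix these by checking against $\epsilon$ and the identity $dr\wedge*dr=F\epsilon$ before expanding.
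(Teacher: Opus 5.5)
Your proposal is correct and follows essentially the same route as the paper: you substitute $dT=dv-dr/F$ into the diagonal form of Corollary~\ref{cor:static_orbit}, split off $dv^2$, and use $t=v-r$ to show that $\eta$ is Minkowski space. Your extra check $dv\wedge dr=-\epsilon\neq 0$ usefully makes explicit the chart legitimacy that the paper leaves implicit. The orientation worry is unnecessary, though: the sign of the $-2\,dv\,dr$ cross term does not depend on the convention for $*$, since flipping it only exchanges $v$ with $-u$.
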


\begin{proof}
Starting from \eqref{eq:static_orbit}, use \(v=T+R_*\) and \(dr=F\,dR_*\):
\begin{align}
h
&=
F(dT^2-dR_*^2)
\nonumber\\
&=
F(dv-dR_*)^2-F\,dR_*^2
\nonumber\\
&=
F\,dv^2-2F\,dv\,dR_*
\nonumber\\
&=
F\,dv^2-2\,dv\,dr.
\end{align}
This gives \eqref{eq:ingoing_EF}. The outgoing form follows analogously from
\(u=T-R_*\).

Now define
\begin{equation}
\eta:=dv^2-2\,dv\,dr-r^2d\Omega_2^2.
\end{equation}
Setting
\begin{equation}
t:=v-r
\end{equation}
gives
\begin{equation}
\eta=dt^2-dr^2-r^2d\Omega_2^2,
\end{equation}
so \(\eta\) is Minkowski space in spherical coordinates. Therefore
\begin{equation}
g-\eta
=
\bigl(F(r)-1\bigr)\,dv^2
=
-\frac{2M}{r}\,dv^2.
\end{equation}
Since \(dv\) is null with respect to \(\eta\), this is a Kerr--Schild decomposition,
\begin{equation}
g_{\mu\nu}
=
\eta_{\mu\nu}+2V\,n_\mu n_\nu,
\qquad
V=-\frac{M}{r},
\qquad
n_\mu dx^\mu=dv.
\end{equation}
Thus the local Kerr--Schild form follows directly from the exact seed forms,
without first identifying the local Schwarzschild geometry in a preferred
chart.
\end{proof}

\section{Collapse to the real monopole seed}
\label{sec:collapse_seed}

The orbit-space construction of Secs.~\ref{sec:orbit_seed}--\ref{sec:EF_KS}
reaches Schwarzschild directly. The purpose of the present section is complementary:
to record the corresponding uniqueness and converse statements internal to the
stationary optical sector itself. The first statement concerns the stationary
optical scalar system on a flat background; the second combines it with the
converse machinery of Ref.~\cite{Easson:2026xod}.

The same symbol \(r\) is used below for the orbit-space areal radius and for
the Euclidean radial coordinate \(r=\sqrt{x^2+y^2+z^2}\) on the flat
Kerr--Schild background. In the spherically symmetric reconstruction these
quantities coincide, since the symmetry spheres have area \(4\pi r^2\).

\begin{proposition}[Spherical uniqueness of the stationary optical seed]
\label{prop:seed_collapse_monopole}
Let \(U\subset \mathbb R^3\setminus\{0\}\) be a connected \(SO(3)\)-invariant
region, with origin chosen at the center of symmetry. Let
\(\mathcal R:U\to\mathbb C\) be a nowhere-vanishing function and define
\begin{equation}
\rho:=-\frac{1}{\mathcal R}.
\end{equation}
Assume that \(\mathcal R\) is spherically symmetric,
\begin{equation}
\mathcal R=\mathcal R(r),
\qquad
r:=\sqrt{x^2+y^2+z^2},
\label{eq:R_radial}
\end{equation}
and that it satisfies the stationary optical scalar system
\begin{equation}
(\nabla\mathcal R)^2=1,
\qquad
\mathcal R\,\nabla^2\mathcal R=2,
\label{eq:spherical_optical_scalar_system}
\end{equation}
where \(\nabla\) and \(\nabla^2\) are the flat Euclidean gradient and Laplacian on
\(\mathbb R^3\). Then there exists \(\sigma\in\{+1,-1\}\) such that
\begin{equation}
\mathcal R=\sigma r,
\qquad
\rho=-\frac{\sigma}{r}.
\label{eq:seed_collapse_result}
\end{equation}
Thus, up to the choice of null branch, the stationary optical seed collapses to the
unique real monopole.

Moreover, the reconstructed spatial congruence is
\begin{equation}
\mathbf{k}=\sigma\,\nabla r,
\end{equation}
and the corresponding optical scalars are
\begin{equation}
\theta=\frac{\sigma}{r},
\qquad
\omega=0.
\end{equation}
\end{proposition}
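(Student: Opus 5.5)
The plan is to reduce the optical scalar system to ordinary differential equations in \(r\) and then use the two equations against each other. Since \(\mathcal R=\mathcal R(r)\) on the \(SO(3)\)-invariant region \(U\), which is a union of spherical shells and connected, so its radial range is an interval \(I\subset(0,\infty)\), we have \(\nabla\mathcal R=\mathcal R'(r)\,\hat{\mathbf x}\), where \(\hat{\mathbf x}=\nabla r\). We also have \(\nabla^2\mathcal R=\mathcal R''+\tfrac{2}{r}\mathcal R'\). Note that \((\nabla\mathcal R)^2\) denotes the complex bilinear square \(\nabla\mathcal R\cdot\nabla\mathcal R\), not the Hermitian norm. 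The eikonal equation therefore becomes \((\mathcal R')^2=1\) on \(I\).

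First, I fix the branch. Because \(\mathcal R'\) is continuous and takes values in \(\{+1,-1\}\) on the connected interval \(I\), it is constant: \(\mathcal R'=\sigma\) with \(\sigma\in\{\pm1\}\). This is the one place where connectedness of \(U\) enters, and it is the step needing the most care. Integrating gives
\begin{equation}
\mathcal R=\sigma r+c
\end{equation}
for some complex constant \(c\). This constant is where a possible imaginary part, as in Kerr, could hide.

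Second, I use the Laplacian equation to kill the constant. With \(\mathcal R''=0\) we get \(\nabla^2\mathcal R=2\sigma/r\). Then \(\mathcal R\nabla^2\mathcal R=2\) reads
\begin{equation}
(\sigma r+c)\,\frac{2\sigma}{r}=2+\frac{2\sigma c}{r}=2,
\end{equation}
so \(c=0\) on the nonempty interval \(I\). Hence \(\mathcal R=\sigma r\) and \(\rho=-\sigma/r\). Nonvanishing is automatic because \(0\notin U\). I expect this second step to be the main obstacle only conceptually, since it is a one-line computation. The point to stress is that the eikonal equation alone allows a complex shift, and it is the second equation that forces the seed to be real. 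This is precisely where spherical symmetry excludes Kerr-type seeds, for which \(c\) would be \(ia\cos\vartheta\), a non-radial shift.

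Finally, for the congruence and optical scalars, I use the reconstruction of Ref.~\cite{Easson:2026xod}, in which the spatial congruence is \(\mathbf k=\nabla\mathcal R\). This gives \(\mathbf k=\sigma\nabla r\), a real unit vector field. The optical scalars are read off from \(\rho=-\theta-i\omega\), which yields \(\theta=\sigma/r\) and \(\omega=0\). As a consistency check, \(\theta=\tfrac12\nabla\cdot\mathbf k=\sigma/r\), and the twist vanishes because \(\nabla\times\mathbf k=\sigma\nabla\times\nabla r=0\).
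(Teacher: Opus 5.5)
Your proposal is correct and follows the paper's proof essentially step for step: reduce the eikonal equation to $(\mathcal R')^2=1$, use continuity on the connected radial interval to get $\mathcal R=\sigma r+c$, eliminate $c$ with $\mathcal R\,\nabla^2\mathcal R=2$, then read off $\mathbf k=\nabla\mathcal R$ and $\theta,\omega$ from $\rho=-\theta-i\omega$. The divergence/curl consistency check is a useful extra, but the Kerr aside is only heuristic, because the real part of the Kerr seed is the spheroidal radius rather than the Euclidean $r$.
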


\begin{proof}
Because \(\mathcal R=\mathcal R(r)\), the eikonal equation in
\eqref{eq:spherical_optical_scalar_system} gives
\begin{equation}
\left(\frac{d\mathcal R}{dr}\right)^2=1.
\label{eq:dRdr_squared}
\end{equation}
Since \(d\mathcal R/dr\) is continuous and takes values in the discrete set
\(\{+1,-1\}\), it is constant on each connected radial interval. Hence there exist
\(\sigma\in\{+1,-1\}\) and \(c\in\mathbb C\) such that
\begin{equation}
\mathcal R(r)=\sigma r+c.
\label{eq:R_linear_radial}
\end{equation}

For any radial function \(f(r)\) on \(\mathbb R^3\setminus\{0\}\),
\begin{equation}
\nabla^2 f=f''+\frac{2}{r}f'.
\label{eq:radial_laplacian}
\end{equation}
Applying this to \eqref{eq:R_linear_radial} gives
\begin{equation}
\nabla^2\mathcal R=\frac{2\sigma}{r}.
\end{equation}
Hence
\begin{equation}
\mathcal R\,\nabla^2\mathcal R
=
(\sigma r+c)\frac{2\sigma}{r}
=
2+\frac{2\sigma c}{r}.
\end{equation}
Imposing the second equation in \eqref{eq:spherical_optical_scalar_system} forces
\begin{equation}
c=0.
\end{equation}
Therefore
\begin{equation}
\mathcal R=\sigma r,
\qquad
\rho=-\frac{1}{\mathcal R}=-\frac{\sigma}{r},
\end{equation}
which proves \eqref{eq:seed_collapse_result}.

Because \(\mathcal R\) is real and \((\nabla\mathcal R)^2=1\), the stationary
reconstruction reduces to
\begin{equation}
\mathbf{k}=\nabla\mathcal R=\sigma\,\nabla r.
\end{equation}
Finally, since \(\rho=-\theta-i\omega\), one finds
\begin{equation}
\theta=\frac{\sigma}{r},
\qquad
\omega=0.
\end{equation}
\end{proof}

The orbit-space derivation up to this point is self-contained. The converse
statement below is conditional only on the local reconstruction formalism of
Ref.~\cite{Easson:2026xod}, which we use in the minimal form stated explicitly in
Theorem~\ref{thm:spherical_optical_converse}.

\begin{theorem}[Spherical converse in the stationary optical sector]
\label{thm:spherical_optical_converse}
Let \(\mathcal R\) be a nowhere-vanishing spherically symmetric solution of the
stationary optical scalar system on a connected \(SO(3)\)-invariant region
\(U\subset\mathbb R^3\setminus\{0\}\), and reconstruct the stationary optical data
and vacuum Kerr--Schild metric by the local converse of
Ref.~\cite{Easson:2026xod}. Then
\begin{equation}
\mathcal R=\sigma r,
\qquad
\rho=-\frac{\sigma}{r},
\qquad
\mathbf{k}=\sigma\,\nabla r,
\qquad
\sigma\in\{+1,-1\},
\end{equation}
and the reconstructed stationary vacuum Kerr--Schild metric is locally
Schwarzschild. Equivalently, up to the choice of null branch, the only
spherically symmetric solution of the stationary optical system reconstructs the
Schwarzschild family.
\end{theorem}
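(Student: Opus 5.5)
The first half of the statement follows directly from Proposition~\ref{prop:seed_collapse_monopole}. Its hypotheses are exactly those assumed here: \(\mathcal R\) is nowhere vanishing and radial on a connected \(SO(3)\)-invariant \(U\subset\mathbb R^3\setminus\{0\}\), and it satisfies \((\nabla\mathcal R)^2=1\) and \(\mathcal R\nabla^2\mathcal R=2\). The proposition therefore gives \(\mathcal R=\sigma r\), \(\rho=-\sigma/r\), \(\mathbf k=\sigma\nabla r\), and \(\theta=\sigma/r\), \(\omega=0\). The remaining work is to feed these data into the local reconstruction of Ref.~\cite{Easson:2026xod} and identify the output metric.

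For the reconstruction I would use the converse only in its minimal form. The stationary null one-form is \(n_\mu dx^\mu = dt - \mathbf k\cdot d\mathbf x\), up to the convention-dependent sign of the spatial part. The metric is \(g=\eta+2V\,n\otimes n\), where \(V\) is the real part of a mass-weighted multiple of \(\rho\), i.e. \(V\propto \mathrm{Re}\,\rho\) times a constant fixed by the vacuum equations; for Kerr this is the standard \(V=-M r^3/(r^4+a^2z^2)\). With the monopole data \(\mathbf k\cdot d\mathbf x=\sigma\,dr\), so \(n=dt-\sigma\,dr\). Setting \(v:=t+r\) for \(\sigma=-1\) (ingoing) or \(u:=t-r\) for \(\sigma=+1\) (outgoing) gives \(n=dv\) or \(n=du\). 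The profile reduces to \(V=-M/r\) for a real constant \(M\); any sign ambiguity from \(\sigma\) is absorbed into \(M\).

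Then \(\eta=dt^2-dr^2-r^2d\Omega_2^2=dv^2-2\,dv\,dr-r^2d\Omega_2^2\), and
\begin{equation}
g=\Bigl(1-\frac{2M}{r}\Bigr)dv^2-2\,dv\,dr-r^2d\Omega_2^2 ,
\end{equation}
which is exactly \eqref{eq:ingoing_EF} (or \eqref{eq:outgoing_EF} on the other branch). By Theorem~\ref{thm:EF_KS} this is locally Schwarzschild. The Euclidean \(r\) coincides with the areal radius because the \(SO(3)\) orbits are round spheres of area \(4\pi r^2\), as noted before Proposition~\ref{prop:seed_collapse_monopole}. As a consistency check I would verify directly that this metric is Ricci-flat for every constant \(M\), so that the family is exactly Schwarzschild with the mass parameter left free.

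The main obstacle is the reconstruction step: pinning down precisely which constant in the converse of Ref.~\cite{Easson:2026xod} fixes the Kerr--Schild profile, and showing that for a real seed it collapses to \(-M/r\) with no residual freedom. To handle this, I would first use the vacuum condition on \(V\) along \(\mathbf k\), namely \(\mathbf k\cdot\nabla V\) proportional to \(\theta V\) plus the harmonic constraint \(\nabla^2(rV)\)-type equation. In the radial case this reduces to an ODE in \(r\) whose only solutions regular on \(U\) are \(V=c/r\). I would then set \(c=-M\).
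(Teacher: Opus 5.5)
Your argument is essentially the paper's: Proposition~\ref{prop:seed_collapse_monopole} supplies the monopole data, the converse of Ref.~\cite{Easson:2026xod} is quoted directly as $V=m\,\Re\rho=-m\sigma/r$ with the sign absorbed into $M:=m\sigma$ (so your ODE derivation of the $1/r$ profile is optional), and $g_{\mu\nu}=\eta_{\mu\nu}-\frac{2M}{r}n_\mu n_\nu$ is then recognized as the Eddington--Finkelstein Kerr--Schild form of Schwarzschild. The only difference is the sign convention for the spatial part of $n$: the paper takes $n_\mu dx^\mu=dt+\sigma\,dr$, so its ingoing branch is $\sigma=+1$ rather than your $\sigma=-1$, which does not affect the conclusion.
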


\begin{proof}
By Proposition~\ref{prop:seed_collapse_monopole},
\begin{equation}
\mathcal R=\sigma r,
\qquad
\rho=-\frac{\sigma}{r},
\qquad
\mathbf{k}=\sigma\,\nabla r,
\qquad
\omega=0.
\end{equation}
The converse theorem of Ref.~\cite{Easson:2026xod} then reconstructs the
stationary vacuum Kerr--Schild data with
\begin{equation}
V=m\,\Re \rho=-\frac{m\sigma}{r}
\end{equation}
for some real constant \(m\). Defining
\begin{equation}
M:=m\sigma,
\end{equation}
one obtains
\begin{equation}
V=-\frac{M}{r}.
\end{equation}
The associated null Kerr--Schild one-form is
\begin{equation}
n_\mu dx^\mu = dt+\sigma\,dr,
\end{equation}
so the reconstructed metric is
\begin{equation}
g_{\mu\nu}
=
\eta_{\mu\nu}+2V\,n_\mu n_\nu
=
\eta_{\mu\nu}-\frac{2M}{r}\,n_\mu n_\nu.
\end{equation}
Since the flat background is
\begin{equation}
\eta = dt^2-dr^2-r^2d\Omega_2^2,
\end{equation}
this gives
\begin{equation}
ds^2
=
dt^2-dr^2-r^2d\Omega_2^2
-\frac{2M}{r}(dt+\sigma\,dr)^2.
\label{eq:spherical_converse_metric}
\end{equation}
For \(\sigma=+1\), Eq.~\eqref{eq:spherical_converse_metric} is the ingoing
Eddington--Finkelstein Kerr--Schild form of Schwarzschild; for \(\sigma=-1\), it is
the outgoing form. Hence the reconstructed stationary vacuum Kerr--Schild metric is
locally Schwarzschild.
\end{proof}

For the ingoing branch singled out by Theorem~\ref{thm:EF_KS}, one has
\(\sigma=+1\), and therefore \(\mathcal R=r\) and \(\rho=-1/r\). In this sense, the spherical vacuum sector isolates a common monopole seed whose
distinct classical dressings yield the Schwarzschild/Newton potential on the
gravitational side and the Coulomb potential on the gauge side.

\begin{remark}[Locality and horizons]
\label{rem:locality_horizons}
The construction is local and applies on any region where \(dr\neq 0\) and
\(F=1-2M/r\neq 0\). At a nondegenerate horizon \(F=0\), the normalized seed forms
\(dT\) and \(dR_*\) become singular, but the corresponding ingoing and outgoing
Eddington--Finkelstein charts extend regularly across the horizon. Global extensions
are obtained by patching neighboring local regions.
\end{remark}

\section{Discussion}
\label{sec:discussion}

We have presented a local seed-to--Kerr--Schild route to Birkhoff rigidity in
four-dimensional spherical vacuum gravity. The reduced orbit-space vacuum equations
give rise to exact null seed forms whose integration yields the local
Eddington--Finkelstein and Kerr--Schild descriptions of Schwarzschild. The stationary optical sector provides the complementary converse statement: spherical symmetry collapses the inverse optical seed to
the real monopole branches, and the associated Kerr--Schild reconstruction is
Schwarzschild.  The same monopole profile admits a natural double-copy interpretation:
gravitational Kerr--Schild dressing yields Schwarzschild, while electrostatic
dressing yields Coulomb.

Several questions remain open. Beyond the spherical case, it would be
interesting to see if the present covariant null seed-form route, arising
directly from the reduced orbit-space vacuum equations, admits a comparably
direct analogue in less symmetric settings, and in particular beyond the
stationary sector in which genuinely complex seeds already occur, as in the
Kerr solution. It is also natural to ask whether the rigidity interpretation
isolated in the spherical case, which singles out the real monopole sector,
extends in a useful form to genuinely complex stationary seeds. A particularly
immediate extension concerns the \(\Lambda\neq 0\) case: one may ask whether the
present covariant seed-form route admits a Schwarzschild--(A)dS analogue, and
whether its unique spherical branch matches the curved-background deformation of
the optical-seed picture.

\acknowledgments
It is a pleasure to thank M. Falato and M. Pezzelle for valuable discussions. DAE is supported in part by the U.S. Department of Energy,
Office of High Energy Physics, under Award Number DE-SC0019470.

\bibliographystyle{apsrev4-2}

\begin{thebibliography}{99}

\bibitem{Birkhoff1923}
G.~D.~Birkhoff,
\textit{Relativity and Modern Physics}
(Harvard University Press, Cambridge, MA, 1923), p.~253.

\bibitem{Jebsen1921}
J.~T.~Jebsen,
``Über die allgemeinen kugelsymmetrischen Lösungen der Einsteinschen Gravitationsgleichungen im Vakuum,''
Ark.\ Mat.\ Astron.\ Fys.\ (Stockholm) \textbf{15}, no.~18, 1--9 (1921).

\bibitem{Jebsen2005}
J.~T.~Jebsen,
``On the general spherically symmetric solutions of Einstein's gravitational equations in vacuo,''
Gen.\ Relativ.\ Gravit.\ \textbf{37}, 2253--2259 (2005),
doi:10.1007/s10714-005-0168-y.

\bibitem{Alexandrow1923}
W.~Alexandrow,
``Über den kugelsymmetrischen Vakuumvorgang in der Einsteinschen Gravitationstheorie,''
Ann.\ Phys.\ (Leipzig) \textbf{377}, no.~18, 141--152 (1923),
doi:10.1002/andp.19233771804.

\bibitem{Eiesland1925}
J.~Eiesland,
``The group of motions of an Einstein space,''
Trans.\ Amer.\ Math.\ Soc.\ \textbf{27}, no.~2, 213--245 (1925),
doi:10.1090/S0002-9947-1925-1501308-7.



\bibitem{Schleich:2009uj}
K.~Schleich and D.~M.~Witt,
``A simple proof of Birkhoff's theorem for cosmological constant,''
J. Math. Phys. \textbf{51}, 112502 (2010)
doi:10.1063/1.3503447
[arXiv:0908.4110 [gr-qc]].

\bibitem{Deser:2004gi}
S.~Deser and J.~Franklin,
``Schwarzschild and Birkhoff a la Weyl,''
Am. J. Phys. \textbf{73}, 261-264 (2005)
doi:10.1119/1.1830505
[arXiv:gr-qc/0408067 [gr-qc]].

\bibitem{Misner:1964je}
C.~W.~Misner and D.~H.~Sharp,
``Relativistic equations for adiabatic, spherically symmetric gravitational collapse,''
Phys. Rev. \textbf{136}, B571-B576 (1964)
doi:10.1103/PhysRev.136.B571

\bibitem{Kodama:1979vn}
H.~Kodama,
``Conserved Energy Flux for the Spherically Symmetric System and the Back Reaction Problem in the Black Hole Evaporation,''
Prog. Theor. Phys. \textbf{63}, 1217 (1980)
doi:10.1143/PTP.63.1217

\bibitem{Schmidt:1997mq}
H.~J.~Schmidt,
``A New proof of Birkhoff's theorem,''
Grav. Cosmol. \textbf{3}, 185-190 (1997)
[arXiv:gr-qc/9709071 [gr-qc]].



\bibitem{Maciel:2018tnc}
A.~Maciel, M.~Le Delliou and J.~P.~Mimoso,
``Revisiting the Birkhoff theorem from a dual null point of view,''
Phys. Rev. D \textbf{98}, no.2, 024016 (2018)
doi:10.1103/PhysRevD.98.024016
[arXiv:1803.11547 [gr-qc]].


\bibitem{Hayward:1994bu}
S.~A.~Hayward,
``Gravitational energy in spherical symmetry,''
Phys. Rev. D \textbf{53}, 1938-1949 (1996)
doi:10.1103/PhysRevD.53.1938
[arXiv:gr-qc/9408002 [gr-qc]].

\bibitem{Abreu:2010ru}
G.~Abreu and M.~Visser,
``Kodama time: Geometrically preferred foliations of spherically symmetric spacetimes,''
Phys. Rev. D \textbf{82}, 044027 (2010)
doi:10.1103/PhysRevD.82.044027
[arXiv:1004.1456 [gr-qc]].

\bibitem{Easson:2023ytf}
D.~A.~Easson and M.~W.~Pezzelle,
``Kleinian black holes,''
Phys. Rev. D \textbf{109}, no.4, 044007 (2024)
doi:10.1103/PhysRevD.109.044007
[arXiv:2312.00879 [hep-th]].

\bibitem{Easson:2026xod}
D.~A.~Easson and M.~J.~Falato,
``Untwisting the double copy: the zeroth copy as an optical seed,''
JHEP \textbf{05}, 249 (2026)
doi:10.1007/JHEP05(2026)249
[arXiv:2604.05103 [hep-th]].

\end{thebibliography}

\end{document}